\newtheorem{definition}{Definition}[]
\newtheorem{proposition}{Proposition}[]
\newtheorem{theorem}{Theorem}[]
\newtheorem{corollary}{Corollary}[theorem]
\newtheorem{lemma}[]{Lemma}
\newcommand\norm[1]{\left\lVert#1\right\rVert}
\newcommand{\argmax}{\arg\!\max}
\DeclareMathOperator{\E}{\mathbb{E}}
\def\x{{\mathbf x}}
\def\z{{\mathbf z}}
\def\y{{\mathbf y}}
\def\B{{\mathbf B}}
\def\H{{\mathbf A}}
\def\P{{\mathbf P}}
\def\R{{\mathbb{R}}}
\def\I{{\mathbf I}}
\def\F{{\mathbf F}}
\def\x{{\mathbf x}}
\def\w{{\mathbf w}}
\def\X{{\mathbf X}}
\def\Y{{\mathbf Y}}
\def\A{{\mathbf H}}
\def\B{{\mathbf B}}
\def\P{{\mathbf P}}
\def\R{{\mathbb{R}}}
\def\I{{\mathbf I}}
\def\a{{\mathbf h}}
\def\v{{\mathbf v}}
\newcommand{\ts}{\textsuperscript}
\newcommand{\abs}[1]{\lvert #1 \rvert}
\newcommand{\etal}{{\em et~al.~}}
\newcommand{\Var}{\mathrm{Var}}
\let\oldref\ref
\renewcommand{\ref}[1]{(\oldref{#1})}
\newcommand{\RNum}[1]{\uppercase\expandafter{\romannumeral #1\relax}}
\renewcommand{\fnum@figure}{Fig.~\thefigure}
\title{
A Randomized Greedy Algorithm for Near-Optimal Sensor Scheduling in Large-Scale Sensor Networks
}
\author{Abolfazl~Hashemi,~\IEEEmembership{Student Member,~IEEE,} Mahsa~Ghasemi,~\IEEEmembership{Student Member,~IEEE,} Haris~Vikalo,~\IEEEmembership{Senior Member,~IEEE,} and Ufuk~Topcu,~\IEEEmembership{Member,~IEEE}
}
\begin{document}
%
\maketitle
\begin{abstract}
In a sensor network governed by a linear dynamical system, often due to practical constraints such as computational and power limitations, it is desired to select a small subset to perform the state estimation task. In this paper,  we formulate this task as the combinatorial problem of maximizing a monotone set function under a uniform matroid constraint. By introducing the notion of curvature we show that the proposed objective function is weak submodular under certain conditions by establishing an upperbound on its maximum element-wise curvature.
To efficiently solve the proposed combinatorial problem, we develop a randomized greedy algorithm that is significantly faster than state-of-the-art methods.  we analyze the performance of the proposed algorithm and establish performance guarantees on the mean square error (MSE) of the linear estimator that uses the selected sensors in terms of the optimal MSE. Extensive simulation results demonstrate efficacy of the randomized greedy algorithm in a comparison with greedy and semidefinite programming relaxation methods.
\end{abstract}
 \begin{IEEEkeywords}
sensor selection, sensor networks, Kalman filtering, weak submodularity 
 \end{IEEEkeywords}
\section{Introduction}\label{sec:intro}
\IEEEPARstart{M}{odern} sensor networks, acquire myriads of measurements from a dynamical system through communication of a large number of sensors and sensor fusion centers. In these networks, due to various practical considerations and limitations on resources including computational and communication constraints, the fusion center which aggregates information typically queries only a small subset of the available sensors. This scenario, also known as the {\it sensor selection} problem, arises in various applications in control systems and signal processing including sensor selection for Kalman filtering \cite{nordio2015sensor,shamaiah2010greedy,tzoumas2016sensor}, batch state estimation and stochastic process estimation \cite{tzoumas2016scheduling,tzoumas2016near}, minimal actuator placement \cite{summers2016submodularity,tzoumas2015minimal}, voltage control and meter placement in power networks \cite{damavandi2015robust,gensollen2016submodular,liu2016towards}, sensor scheduling in wireless sensor networks \cite{shamaiah2012greedy,nordio2015sensor}, and subset selection in machine learning \cite{mirzasoleiman2014lazier}.

Although an optimal solution to the sensor selection problem can be achieved by means of branch-and-bound algorithms \cite{welch1982branch}, this requires finding solution to a computationally challenging combinatorial optimization problem which by a reduction to the set cover problem is shown to be NP-hard \cite{williamson2011design}. This in turn has motivated development of heuristics and approximate algorithms. For instance, in \cite{joshi2009sensor}, sensor selection problem is formulated as the maximization (minimization) of the $\log \det$ of the Fisher information matrix (error covariance matrix) and a semidefinite programming relaxation is proposed. The computational complexity of the SDP relaxation is cubic in the number of sensors in the network which limits practical feasibility of this scheme, especially for the modern sensor networks characterized by a growing number of sensors in the network. Additionally, the SDP relaxation does not come with any performance guarantees. To overcome these drawbacks, Shamaiah et al. \cite{shamaiah2010greedy} proposed a greedy algorithm for the $\log \det$ maximization formulation of the sensor selection problem whose complexity is  lower than that of the SDP relaxation. Since the $\log \det$ of the Fisher information matrix is a monotone submodular function, the greedy scheme in \cite{shamaiah2010greedy} is a $(1-1\slash e)$-approximation algorithm. More recently, the greedy algorithm for $\log \det$ maximization was employed and analyzed in a number of other practical settings \cite{tzoumas2016scheduling,tzoumas2015minimal,tzoumas2016near,tzoumas2016sensor}. 
All the prior work consider $\log \det$ of the Fisher information matrix which is related to the volume of the $\eta$-confidence ellipsoid. However, this criterion is not explicitly related to the mean-square error (MSE) which is often the natural performance measure of interest in sensor selection and state estimation problems. The MSE, i.e., the trace of the error covariance matrix, is not supermodular \cite{krause2008near,olshevsky2017non}. Therefore, the search for an approximation algorithm with performance guarantees on the estimator's achievable MSE remains an open research problem.

Sensor selection is related to the problem of maximizing a monotone submodular function subject to a uniform matroid constraint.
Nemhauser et al. \cite{nemhauser1978analysis}  considered this problem and showed that the greedy algorithm that iteratively selects items with maximum marginal gain provides a $(1-1/e)$-approximation factor.
In \cite{mirzasoleiman2014lazier}, a ($1-1/e-\epsilon$)-approximation stochastic-greedy algorithm is developed for the maximization of monotone increasing submodular functions under cardinality constraint that reduces the complexity of the greedy algorithm proposed in \cite{nemhauser1978analysis}. However, the assumption of submodularity in \cite{nemhauser1978analysis,mirzasoleiman2014lazier} does not hold in the sensor selection problem with MSE objective.
Recently, Wang \etal \cite{wang2016approximation} analyzed the performance of the greedy algorithm in the general setting where the function is monotone non-decreasing, but not necessarily submodular. They defined a total curvature $\mu$ and showed that the greedy algorithm provides a $(\frac{1}{1+\mu})$-approximation under matroid constraint. However, determining the elemental curvature defined in \cite{wang2016approximation} is itself an NP-hard task. Therefore, finding an explicit approximation factor for the settings where the objective function is not supermodular, e.g., trace of the error covariance matrix in sensor scheduling for state estimation via Kalman filtering, remains a challenge.

As we stated before, the natural objective function that is typically of interest in sensor selection applications, the MSE, is not submodular (or supermodular, in case one considers the minimization formulation of the problem). Hence, the performance guarantees for the greedy scheme derived in \cite{nemhauser1978analysis,fisher1978analysis} no longer hold. Moreover, processing massive amounts of data collected by modern large-scale networks may be challenging even when relying on greedy algorithms. To  address these challenges, in this paper we formulate the task of sensor selection in a large-scale sensor network as the problem of maximizing a monotone non-submodular objective function directly related to the MSE of the linear estimator of the states in a linear dynamical system. By introducing the notion of curvature $c$, we derive sufficient conditions under which the objective function of the proposed framework is weak submodular. An implication of these results is that in the important scenarios of Gaussian and Bernoulli measurement vectors that frequently come up in dimensionality-reduced Kalman filtering using random projections \cite{berberidis2016data}, the MSE objective is with high probability weak submodular. Since state-of-the-art sensor selection schemes based on greedy optimization and SDP relaxation face computational burden in modern sensor networks, we further propose a randomized greedy algorithm and find a bound on the MSE of the state estimate formed by the Kalman filter that uses the measurements of the sensors selected by the randomized greedy algorithm. Using extensive simulations on real and synthetic data, we illustrate that the proposed randomized greedy sensor selection scheme significantly outperforms both greedy and SDP relaxation methods in terms of runtime and computational complexity while providing nearly equivalent or improved performance. 
 
The rest of the paper is organized as follows. Section \oldref{sec:sys} explains the system model. In Section \oldref{sec:alg} we present the novel formulation of sensor selection problem  and establish bound on curvature of its MSE-related objective. In Section \oldref{sec:anl}, we introduce the randomized greedy algorithm and analyze its performance. Section \oldref{sec:sim} presents the simulation results while the concluding remarks are stated in Section \oldref{sec:concl}. MATLAB implementation of the proposed algorithm in this paper is freely available at \url{https://github.com/realabolfazl/RGSS}.

Before proceeding to subsequent section, we first briefly summarize the notation used in the paper. Bold capital letters refer to matrices and bold lowercase letters represent vectors. $\A_{ij}$ denotes the $(i,j)$ entry of $\A$, $\a_j$ is the $j\ts{th}$ row of $\A$, $\A_{S}$ is a submatrix of $\A$ that contains rows indexed by set $S$, and $\lambda_{max}(\A)$
and $\lambda_{min}(\A)$ are maximum and minimum eigenvalues of $\A$, respectively. Spectral ($\ell_2$) norm of a matrix is denoted by $\|.\|$. $\I_n \in \R^{n\times n}$ is the identity matrix. Moreover, let $[n] := \{1,2,\dots,n\}$.
\section{System Model and Problem Formulation}\label{sec:sys}
Consider a linear time-varying dynamical system and its measurement model,
\begin{equation}\label{eq:sys}
\begin{aligned}
\x(t+1) &= \H(t)\x(t)+\w(t) \\
\y(t) &= \A(t)\x(t)+\v(t),
\end{aligned}
\end{equation}
where $\x(t) \in \R^m$ is the state vector, $\y(t) \in 
\R^n$ is the measurement vector, $\w(t)$ and $\v(t)$ 
are zero-mean Gaussian noises with covariances
$\mathbf{Q}(t)$ and $\mathbf{R}(t)$, respectively, $\H(t) 
\in  \R^{m \times m}$ is the state transition
matrix and $\A(t) \in  \R^{n \times m}$ is the matrix 
whose rows at time $t$ are the measurement vectors 
$\a_i(t) \in \R^m$. We assume that the states $\x(t)$ are uncorrelated with $\w(t)$ and $\v(t)$. In addition, for simplicity of exposition we assume that $\x(0) \sim {\cal N}(0, \mathbf{\Sigma}_x)$, $\mathbf{Q}(t)=\sigma^2\I_m$, and $\mathbf{R}(t)=\sigma^2\I_n$.

Due to limited resources, fusion center aims to select $k$ out of $n$ sensors and use their measurements to estimate the state vector $\x(t)$ by minimizing the mean squared error (MSE) in the Kalman filtering setting. Note that we assume that the measurement vectors $\a_i(t)$ are available at the fusion center.

Let $\P_{t|t-1}$ and $\P_{t|t}$ be the prediction and filtered error covariance at
time instant $t$, respectively. Then
\begin{equation*}
	\begin{aligned}
		\P_{t|t-1} &= \H(t)\P_{t-1|t-1}\H(t)^\top+\mathbf{Q}(t) \\
		\P_{t|t} &= \left(\P_{t|t-1}^{-1}+\A_{S_{t}}(t)^\top\mathbf{R}_{S_{t}}(t)^{-1}\A_{S_{t}}(t)\right)^{-1},
	\end{aligned}
\end{equation*}
where $S_{t}$ is the set of selected sensors at time $t$ and $P_{0|0}=\mathbf{\Sigma}_x$. Since $\mathbf{R}(t)=\sigma^2\I_n$ and the measurements are uncorrelated across sensors, it holds that
\begin{equation*}
	\begin{aligned}
		\P_{t|t} = \left(\P_{t|t-1}^{-1}+\sigma^{-2}\A_{S_{t}}(t)^\top\A_{S_{t}}(t)\right)^{-1}= \F_{S_t}^{-1}
	\end{aligned}
\end{equation*}
where $\F_{S_t} = \P_{t|t-1}^{-1}+\sigma^{-2}\sum_{i\in S_{t}}\a_i(t)\a_i(t)^\top$ is the corresponding Fisher information matrix. Let
\begin{equation}\label{eq:estimator}
\begin{aligned}
\hat{\x}(t) = \frac{1}{\sigma^2}\F_{S_t}^{-1} \A_{S_{t}}(t)^\top \y(t)
\end{aligned}
\end{equation}
be the linear minimum mean-square estimator (LMMSE) of $\x(t)$. Then its MSE at time $t$ is expressed by the trace of the filtered error covariance matrix $\P_{t|t}$. That is,
\begin{equation}\label{eq:mse}
\begin{aligned}
\mathrm{MSE} = \E\left[{\norm{\x(t)-\hat{\x}_{t|t}}}_2^2\right]= \mathrm{Tr}\left(\F_{S_t}^{-1}\right)
\end{aligned}
\end{equation}
where $\hat{\x}_{t|t}$ denotes the filtered estimate of the state vector at time $t$.
To minimize MSE \ref{eq:mse} at each time step $t$ the fusion center seeks a solution to the following optimization problem:
\begin{equation}\label{eq:sensor1}
\begin{aligned}
& \underset{S}{\text{min}}
\quad \mathrm{Tr}\left(\F_S^{-1}\right)
& \text{s.t.}\hspace{0.5cm}  S \subset [n], \phantom{k}|S|=k.
\end{aligned}
\end{equation}
The combinatorial optimization problem \ref{eq:sensor1} is NP-hard by a reduction to the well-known set cover problem \cite{williamson2011design}. Intuitively, the reason is that one needs to exhaustively search over all schedules of $k$ sensors to find the optimal solution. Using the techniques established in \cite{joshi2009sensor} (although for a different optimality criterion from MSE) an approximate solution, i.e., a schedule of sensors that results in a sub-optimal MSE,  can be found by the following SDP relaxation (see Appendix 
I for the details of the derivation),
\begin{equation}\label{sdp}
\begin{aligned}
& \underset{\z,\Y}{\text{min}}
\quad \mathrm{Tr}(\Y)\\
& \text{s.t.}\hspace{0.5cm}  0\leq z_i \leq 1, \phantom{k} \forall i\in [n]\\
& \hspace{0.9cm} \sum_{i=1}^n z_i =k\\
& \hspace{0.9cm} 
\begin{bmatrix}
\Y& \I\\
\I & \P_{t|t-1}^{-1}+\sigma^{-2}\sum_{i=1}^n z_i\a_i(t)\a_i(t)^\top
\end{bmatrix} \succeq \mathbf{0}.
\end{aligned}
\end{equation}
The complexity of the SDP algorithm scales as ${\cal O}(n^3)$ which is infeasible in practice. Furthermore, there is no guarantee on the achievable MSE performance of the SDP relaxation. When the number of sensors in a network and the size of the state vector $\x(t)$ are relatively large, even the greedy algorithm proposed in \cite{shamaiah2010greedy} may be computationally prohibitive. 
\section{Proposed Formulation Based on \\Weak Submodularity}\label{sec:alg}
 In this section, we propose a new formulation for optimizing MSE in a sensor selection task for state estimation via Kalman filtering in a sensor network by leveraging the idea of {\it weak submodularity}. First, we overview some definitions that are essential in the development of the proposed framework.
\begin{definition}
	\label{def:unif}
	A set function $f:2^X\rightarrow \mathbb{R}$ is monotone non-decreasing if $f(S)\leq f(T)$ for all $S\subseteq T\subseteq X$.
\end{definition}
\begin{definition}
	\label{def:submod}
	A set function $f:2^X\rightarrow \mathbb{R}$ is submodular if 
	\begin{equation}
	f(S\cup \{j\})-f(S) \geq f(T\cup \{j\})-f(T)
	\end{equation}
	for all subsets $S\subseteq T\subset X$ and $j\in X\backslash T$. The term $f_j(S)=f(S\cup \{j\})-f(S)$ is the marginal value of adding element $j$ to set $S$.
\end{definition}

A closely related concept to submodularity is the notion of curvature of a set function that quantifies how close the function is to being submodular. Here, we define the element-wise curvature.
\begin{definition}
	The element-wise curvature of a monotone non-decreasing function $f$ is defined as
	\begin{equation}
	{\cal C}_l=\max_{(S,T,i)\in \mathcal{X}_l}{f_i(T)\slash f_i(S)},
	\end{equation}
	where $\mathcal{X}_l = \{(S,T,i)|S \subset T \subset X, i\in X \backslash T, |T\backslash S|=l,|X|=n\}$. Furthermore, the maximum element-wise curvature is given by ${\cal C}_{\max}=\max_{l=1}^{n-1}{{\cal C}_l}$.
\end{definition}
When  ${\cal C}_{\max} > 1$, $f(S)$ is called a weak submodular set function. Note that a set function is submodular if and only if ${\cal C}_{\max} \le 1$. Further, we say $f(S)$ is weak submodular iff it has a bounded ${\cal C}_{\max}$.

\begin{definition}
	Let $X$ be a finite set and let $\mathcal{I}$ be a collection of subsets of $X$. The pair $\mathcal{M}=(X,\mathcal{I})$ is a matroid if the following properties hold: 
	\begin{itemize}
		\item Hereditary property. If $T\in \mathcal{I}$, then $S\in \mathcal{I}$ for all $S\subseteq T$.
		\item Augmentation property. If $S,T\in \mathcal{I}$ and $\abs{S}<\abs{T}$, then there exists $e\in T\backslash S$ such that $S\cup \{e\}\in \mathcal{I}$.
	\end{itemize}
	The collection $\mathcal{I}$ is called the set of independent sets of the matroid $\mathcal{M}$. A maximal independent set is a basis. It is easy to show that all the bases of a matroid have the same cardinality.
\end{definition}

Given a monotone non-decreasing set function $f:2^X\rightarrow \mathbb{R}$ with $f(\emptyset)=0$, and a uniform matroid $\mathcal{M}=(X,\mathcal{I})$, we
are interested in the combinatorial problem
\begin{equation}\label{eq:fmax}
\max_{S \in \mathcal{I}} f(S).
\end{equation}

Next, we establish the proposed framework. Let 
\[f(S) = \mathrm{Tr}\left(\P_{t|t-1}-\F_S^{-1}\right). \]
Evidently, since $\P_{t|t-1}$ is known, given the value of $f(S)$ one can easily infer the corresponding MSE of linear estimator using subset $S$ of sensors selected at time $t$. Then, we can express the optimization problem in \ref{eq:sensor1} as
\begin{equation}\label{eq:sensor}
\begin{aligned}
& \underset{S}{\text{max}}
\quad f(S)
& \text{s.t.}\hspace{0.5cm}  S \subset [n], \phantom{k}|S|=k.
\end{aligned}
\end{equation}
We now argue that \ref{eq:sensor} is indeed an instance of the general combinatorial problem \ref{eq:fmax}.
By defining $X = [n]$ and $\mathcal{I} = \{S \subset X| |S|=k\}$, it is easy to see that $\mathcal{M}=(X,\mathcal{I})$ is a matroid.  In Proposition \oldref{p:mono} below we characterize important properties of $f(S)$ and develop a recursive scheme to efficiently compute the marginal gain of querying a sensor. The formula for the marginal gain of $f(S)$ is also of interest in our subsequent analysis of its weak submodularity properties.
\begin{proposition}\label{p:mono}
\textit{Let $f(S) = \mathrm{Tr}\left(\P_{t|t-1}-\F_S^{-1}\right)$. Then, $f(S)$ is a monotonically increasing set function, $f(\emptyset)=0$, and 
\begin{equation}\label{eq:mg}
f_j(S) = \frac{\a_j(t)^\top\F_S^{-2}\a_j(t)}{\sigma^{2}+\a_j(t)^\top\F_S^{-1}\a_j(t)},
\end{equation}
where,
\begin{equation}\label{eq:upf}
\F_{S \cup\{j\}}^{-1} = \F_{S}^{-1}-\frac{\F_{S}^{-1}\a_{j}(t)\a_{j}(t)^\top\F_{S}^{-1}}{\sigma^2+\a_{j}(t)^\top\F_{S}^{-1}\a_{j}(t)}.
\end{equation}}
\end{proposition}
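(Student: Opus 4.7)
The plan is to verify the four sub-claims in a natural order: first $f(\emptyset)=0$, then the Sherman--Morrison update \ref{eq:upf}, then the marginal-gain identity \ref{eq:mg} as a direct consequence, and finally monotonicity as a sign-check on the expression for $f_j(S)$.

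First, I would observe that when $S=\emptyset$ the Fisher information matrix reduces to $\F_\emptyset=\P_{t|t-1}^{-1}$, so $\F_\emptyset^{-1}=\P_{t|t-1}$ and hence $f(\emptyset)=\mathrm{Tr}(\P_{t|t-1}-\P_{t|t-1})=0$. For the update formula \ref{eq:upf}, the key identity is that by definition
\begin{equation*}
\F_{S\cup\{j\}}=\F_S+\sigma^{-2}\a_j(t)\a_j(t)^\top,
\end{equation*}
which is a rank-one update of $\F_S$. Applying the Sherman--Morrison identity with $u=v=\sigma^{-1}\a_j(t)$ yields
\begin{equation*}
\F_{S\cup\{j\}}^{-1}=\F_S^{-1}-\frac{\sigma^{-2}\F_S^{-1}\a_j(t)\a_j(t)^\top\F_S^{-1}}{1+\sigma^{-2}\a_j(t)^\top\F_S^{-1}\a_j(t)},
\end{equation*}
and clearing the $\sigma^{-2}$ by multiplying numerator and denominator by $\sigma^2$ gives exactly \ref{eq:upf}. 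Note that $\F_S\succ 0$ for all $S$ (since $\P_{t|t-1}\succ 0$), so $\F_S^{-1}$ exists and the denominator is strictly positive, making the application of Sherman--Morrison legitimate.

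Next, I would plug the update formula into the definition $f_j(S)=f(S\cup\{j\})-f(S)=\mathrm{Tr}(\F_S^{-1})-\mathrm{Tr}(\F_{S\cup\{j\}}^{-1})$. Taking traces of both sides of \ref{eq:upf} and using the cyclic property $\mathrm{Tr}(\F_S^{-1}\a_j(t)\a_j(t)^\top\F_S^{-1})=\a_j(t)^\top\F_S^{-2}\a_j(t)$ gives the expression \ref{eq:mg} directly.

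Finally, monotonicity follows from \ref{eq:mg}: $\F_S^{-2}$ is positive semidefinite, so the numerator $\a_j(t)^\top\F_S^{-2}\a_j(t)\ge 0$, and the denominator $\sigma^2+\a_j(t)^\top\F_S^{-1}\a_j(t)$ is strictly positive; therefore $f_j(S)\ge 0$ for every $S$ and $j\notin S$. Writing any $S\subseteq T$ as $S$ enlarged by the elements of $T\setminus S$ added one at a time, summing the non-negative marginal gains yields $f(T)\ge f(S)$, establishing monotonicity. There is no serious obstacle here; the only care needed is to keep the factor $\sigma^{-2}$ from \ref{eq:sys} consistent through the Sherman--Morrison step, and to invoke the positive definiteness of $\F_S$ to justify all inversions.
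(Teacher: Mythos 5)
Your proof is correct and follows essentially the same route as the paper's: both establish $f(\emptyset)=0$ from $\F_\emptyset=\P_{t|t-1}^{-1}$, apply the Sherman--Morrison identity to the rank-one update $\F_{S\cup\{j\}}=\F_S+\sigma^{-2}\a_j(t)\a_j(t)^\top$, extract \ref{eq:mg} via the cyclic property of the trace, and deduce monotonicity from the positive definiteness of $\F_S$. The only cosmetic difference is that you derive \ref{eq:upf} first and then take traces, whereas the paper computes the trace difference directly; this is immaterial.
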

\begin{proof}
See Appendix 
II.
\end{proof}

Recall, as we stated it is shown that MSE is not supermodular \cite{olshevsky2017non,krause2008near}. This immediately implies that the proposed objective $f(S) = \mathrm{Tr}\left(\P_{t|t-1}-\F_S^{-1}\right)$ is not submodular as it is the additive inverse of MSE. However as we show in Theorem \oldref{thm:curv}, under certain conditions, $f(S)$ is characterized with a bounded maximum element-wise curvature $\mathcal{C}_{\max}$. Theorem \oldref{thm:curv} also states a probabilistic theoretical upper bound on $\mathcal{C}_{\max}$ in scenarios where at each time step the measurement vectors $\a_j(t)$'s are i.i.d. random vectors.

Before proceeding to Theorem \oldref{thm:curv} and its proof, we first state the  matrix Bernstein inequality \cite{tropp2015introduction} that will be used in the proof of Theorem \oldref{thm:curv}.
\begin{lemma}\label{thm:ber}
	(Theorem 6.6.1 in \cite{tropp2015introduction}.) Let $\{\X_\ell\}_{\ell=1}^n$ be a finite collection of independent, random, Hermitian matrices in $\R^{m\times m}$. Assume that for all $\ell \in [n]$,
	\begin{equation}\label{eq:ber}
	\E\left[\X_\ell\right] = \mathbf{0}, \quad \lambda_{max}(\X_\ell)\leq L.
	\end{equation}
	Let $\Y=\sum_{\ell=1}^n \X_\ell$. Then, for all $q>0$, it holds that
	\begin{equation}
	\Pr\{\lambda_{max}(\Y)\geq q \} \leq m\exp\left(\frac{-q^2\slash 2}{\|\E\left[\Y^2\right]\|+Lq\slash 3}\right).
	\end{equation} 
\end{lemma}
We now proceed to the statement and proof of Theorem \oldref{thm:curv}.
\begin{theorem}\label{thm:curv}
	Let $\mathcal{C}_{max}$ be the maximum element-wise curvature of $f(S)$, i.e., the objective function of sensor scheduling problem and assume $\|\a_j(t)\|_2^2\leq C$ for all $j$ and $t$. Then, if 
	\begin{equation}\label{eq:fstcond}
		\frac{1}{\sigma^2}\lambda_{max}(\A(t)^\top\A(t)) \leq \frac{1}{\phi} -\frac{1}{\lambda_{min}(\P_{t|t-1})}
	\end{equation}
	for some $0<\phi< \lambda_{min}(\P_{t|t-1})$, it holds that 
	\begin{equation}\label{eq:phic}
	\mathcal{C}_{max} \leq \frac{\lambda_{max}(\P_{t|t-1})^{2}(\sigma^{2}+\lambda_{max}(\P_{t|t-1})C)}{\phi^{2}(\sigma^{2}+\phi C)},
	\end{equation}
	Furthermore, if $\a_j(t)$'s are independent zero-mean random vectors with covariance matrix $\sigma_h^2\I_m$ such that for all $j$, $\sigma_h^2<C$, for all $q>0$ with probability 
	\begin{equation} \label{eq:psuc}
	p\geq 1-m\exp\left(\frac{-q^2\slash 2}{(C-\sigma_h^2)(n\sigma_h^2+q\slash 3)}\right)
	\end{equation}
    it holds that
	\begin{equation}\label{eq:phi}
	\phi \geq \left(\frac{1}{\lambda_{min}(\P_{t|t-1})}+\frac{n\sigma_h^2+q}{\sigma^2}\right)^{-1}.
	\end{equation}
\end{theorem}
\begin{proof}
	First, from the definition of element-wise curvature and \ref{eq:mg} we obtain that
	\begin{equation}
	\begin{aligned}
	{\cal C}_l&=\max_{(S,T,j)\in \mathcal{X}_l}{\frac{(\a_j(t)^\top\F_T^{-2}\a_j(t))(\sigma^{2}+\a_j(t)^\top\F_S^{-1}\a_j(t))}{(\a_j(t)^\top\F_S^{-2}\a_j(t))(\sigma^{2}+\a_j(t)^\top\F_T^{-1}\a_j(t))}}\\\vspace{0.2cm}
	&\leq \max_{(S,T,j)\in \mathcal{X}_l}{\frac{\lambda_{max}(\F_T^{-2})(\sigma^{2}+\lambda_{max}(\F_S^{-1})\|\a_j(t)\|_2^2)}{\lambda_{min}(\F_S^{-2})(\sigma^{2}+\lambda_{min}(\F_T^{-1})\|\a_j(t)\|_2^2)}},
	\end{aligned}
	\end{equation}
	where the last inequality follows from the Courant–Fischer min-max theorem \cite{bellman1997introduction}. Notice that $\lambda_{max}(\F_S^{-1})=\lambda_{min}(\F_S)^{-1}$ and $\lambda_{min}(\F_T)\geq\lambda_{min}(\F_S)\geq\lambda_{min}(\F_\emptyset)=\lambda_{min}(\P_{t|t-1}^{-1})$. This fact, along with the definition of $\mathcal{C}_{max}$ implies
	\begin{equation}\label{eq:ray}
	\begin{aligned}
	\mathcal{C}_{max} &\leq \frac{\lambda_{max}(\P_{t|t-1})^{2}(\sigma^{2}+\lambda_{max}(\P_{t|t-1})\|\a_j(t)\|_2^2)}{\lambda_{max}(\F_S)^{-2}(\sigma^{2}+\lambda_{max}(\F_T)^{-1}\|\a_j(t)\|_2^2)}\\\vspace{0.2cm}
	&\stackrel{(a)}{\leq} \frac{\lambda_{max}(\P_{t|t-1})^{2}(\sigma^{2}+\lambda_{max}(\P_{t|t-1})\|\a_j(t)\|_2^2)}{\lambda_{max}(\F_{[n]})^{-2}(\sigma^{2}+\lambda_{max}(\F_{[n]})^{-1}\|\a_j(t)\|_2^2)}\\\vspace{0.2cm}
	&\stackrel{(b)}{\leq} \frac{\lambda_{max}(\P_{t|t-1})^{2}(\sigma^{2}+\lambda_{max}(\P_{t|t-1})C)}{\lambda_{max}(\F_{[n]})^{-2}(\sigma^{2}+\lambda_{max}(\F_{[n]})^{-1}C)},
	\end{aligned}
	\end{equation}
	where (a) follows from the fact that $\lambda_{max}(\F_S)\leq\lambda_{max}(\F_T)\leq\lambda_{max}(\F_{[n]})$ and (b) holds since
	\begin{equation}
	g(x) = \frac{\sigma^{2}+\lambda_{max}(\P_{t|t-1})x}{\sigma^{2}+\lambda_{max}(\F_{[n]})^{-1}x}
	\end{equation} 
	is a monotonically increasing function for $x>0$. Now, since the maximum eigenvalue of a positive definite matrix satisfies the triangle inequality, we have
	\begin{equation}
	\begin{aligned}
	\lambda_{max}(\F_{[n]})&\leq \frac{1}{\lambda_{min}(\P_{t|t-1})}+\frac{1}{\sigma^2}\lambda_{max}(\sum_{j=1}^n \a_j(t)\a_j(t)^\top)\\
	&\leq\frac{1}{\lambda_{min}(\P_{t|t-1})}+\frac{1}{\sigma^2}\lambda_{max}(\A(t)^\top\A(t))
	\end{aligned}
	\end{equation}
	Hence, by combining the condition \ref{eq:fstcond} and \ref{eq:ray}, we obtain the results stated in \ref{eq:phic}.
	Next, to obtain the expression for the scenario of i.i.d random measurement vectors we now bound $\lambda_{max}(\F_{[n]})$ using the matrix Bernstein inequality \cite{tropp2015introduction}.
	Let $\X_j=\a_j(t)\a_j(t)^\top-\sigma_h^2\I_m$ and $\Y=\sum_{j=1}^n \X_j$. To use the result of Lemma \oldref{thm:ber}, one should find the quantities in \ref{eq:ber}. Note that,
	\begin{equation}
	\begin{aligned}
	\E[\X_j]&= \E[\a_j(t)\a_j(t)^\top-\sigma_h^2\I_m] \\
	&=\E[\a_j(t)\a_j(t)^\top] -\sigma_h^2\I_m =\mathbf{0}. 
	\end{aligned}
	\end{equation}
	This  in turn  implies that $\E[\Y]=\mathbf{0}$.
	Since $\X_j$'s are independent,
	\begin{equation}
	\|\E[\Y^2]\|= \|\E[\sum_{j=1}^n \X_j^2]\|\leq \sum_{j=1}^n \|\E[\X_j^2]\|
	\end{equation}
	by the linearity of expectation and triangle inequality. It just remains to determine
	$\lambda_{max}(\X_j)$ and $\E[\X_j^2]$.
	
	First, we verify $\a_j$ is an eigenvector of $\X_j$:
	\begin{equation}
	\begin{aligned}
	\X_j\a_j &= \left(\a_j(t)\a_j(t)^\top-\sigma_h^2\I_m\right)\a_j\\
	& = \left(\|\a_j(t)\|_2^2-\sigma_h^2\right)\a_j.
	\end{aligned}
	\end{equation} 
	where $\a_j(t)\a_j(t)^\top-\sigma_h^2\I_m$ is the corresponding eigenvalue. Since $\a_j(t)\a_j(t)^\top$ is a rank-1 matrix, other eigenvalues of $\X_j$ are all equal to $-\sigma_h^2$. Hence, 
	\begin{equation}
	\lambda_{max}(\X_j) \leq C-\sigma_h^2 >0. 
	\end{equation}
	
	We now establish an upper-bound for $\E[\X_j^2]$ as follows:
	\begin{equation}
	\begin{aligned}
	\E[\X_j^2] &= \E[\left(\a_j(t)\a_j(t)^\top-\sigma_h^2\I_m\right)\left(\a_j(t)\a_j(t)^\top-\sigma_h^2\I_m\right)]\\
	& = \left(\|\a_j(t)\|_2^2-\sigma_h^2\right)\E[ \a_j(t)\a_j(t)^\top] \\
	&\qquad \qquad \qquad - \sigma_h^2\E[ \left(\a_j(t)\a_j(t)^\top-\sigma_h^2\I_m\right)]\\
	& = \left(\|\a_j(t)\|_2^2-\sigma_h^2\right)\sigma_h^2\I_m\preceq (C-\sigma_h^2)\sigma_h^2\I_m
	\end{aligned}
	\end{equation}
	where we have used the fact that $\E[\X_j]=\mathbf{0}$. Thus, $L =C-\sigma_h^2$ and $\|\E[\Y^2]\| \leq n(C-\sigma_h^2)\sigma_h^2$. Now, according to Lemma \oldref{thm:ber}, for all $q>0$ it holds that $\Pr\{\lambda_{max}(\Y)\leq q\}\geq p$ where 
	\begin{equation}\label{eq:prob}
	p= 1-m\exp\left(\frac{-q^2\slash 2}{(C-\sigma_h^2)(n\sigma_h^2+q\slash 3)}\right).
	\end{equation}
	Therefore, 
	\begin{equation}
	\lambda_{max}(\F_{[n]})\leq \frac{1}{\lambda_{min}(\P_{t|t-1})}+\frac{n\sigma_h^2+q}{\sigma^2}= \phi^{-1}
	\end{equation}
	with probability $p$. This completes the proof.
\end{proof}
\textit{Remark 1:} The setting of i.i.d. random vectors described in Theorem \oldref{thm:curv} arises in scenarios where sketching techniques, such as random projections are used to reduce dimensionality of the measurement equation (see \cite{berberidis2016data} for more details). Such sketching schemes give rise to the following important and widely-used examples:
\begin{enumerate}
	\item {\it Multivariate Gaussian measurement vectors:} Let $\a_j(t) \sim \mathcal{N}(0,\frac{1}{m}\I_m)$ for all $j$. It is easy to show that $\E[\|\a_j(t)\|_2^2] = 1$ for all $j$. Furthermore, it can be shown that $\|\a_j(t)\|_2^2$ is with high probability distributed around its expected value. Therefore, for this case,  $\sigma_h^2 = \frac{1}{m}$ and $C = 1$.
	\item {\it Centered Bernoulli measurement vectors:} Let each entry of $\a_j(t)$ be set to $\pm \frac{1}{\sqrt{m}}$ with equal probability. Therefore, $\|\a_j(t)\|_2^2=1 = C $. Additionally, $\sigma_h^2 = \frac{1}{m}$ since the entries of $\a_j(t)$ are i.i.d. zero-mean random variables with variance $\frac{1}{m}$.
\end{enumerate}

The conditions stated in Theorem \oldref{thm:curv} can be interpreted as conditions on the condition number of  $\P_{t|t-1}$ as explained next.  For sufficiently large $m$, and when $\sigma_h^2 = \frac{1}{m}$, we can approximate $C \approx 1$. Assume $\phi \geq \lambda_{max}(\P_{t|t-1})\slash \Delta$ for some $\Delta>1$. Define
\begin{equation}
\mathrm{SNR} = \frac{\lambda_{max}(\P_{t|t-1})}{\sigma^2},  
\end{equation}
and let
\begin{equation}
\kappa = \frac{\lambda_{max}(\P_{t|t-1})}{\lambda_{min}(\P_{t|t-1})} \geq 1 
\end{equation}
be the condition number of $\P_{t|t-1}$. Then, with some elementary numerical approximations we obtain the following corollary.
\begin{corollary}\label{col:con}
	Assume 
	\begin{equation}
	\Delta \geq \kappa +c_1\frac{n}{m}\mathrm{SNR},
	\end{equation}
	for some $c1>1$. Then, with probability 
	\begin{equation}
	p\geq 1-m\exp(-\frac{n}{m}c_2),
	\end{equation}
	it holds that $\mathcal{C}_{max} \leq \Delta^3$ for some $c_2>0$.
\end{corollary}
Hence, informally, Theorem \oldref{thm:curv} states that for a well-conditioned $\P_{t|t-1}$, curvature of $f(S)$ is small, implying weak-submodularity of $f(S)$. Furthermore, the probability of such event is exponentially increasing in the number of available measurements. 
\section{Randomized Greedy Sensor Selection}\label{sec:anl}
\renewcommand\algorithmicdo{}	
\begin{algorithm}[t]
	\caption{Randomized Greedy Sensor Scheduling}
	\label{alg:greedy}
	\begin{algorithmic}[1]
		\STATE \textbf{Input:}  $\P_{t|t-1}$, $\A_t$, $k$, $\epsilon$.
		\STATE \textbf{Output:} Subset $S_t\subseteq [n]$ with $|S_t|=k$.
		\STATE Initialize $S_t^{(0)} =  \emptyset$, $\F_{S_t^{(0)}}=\P_{t|t-1}^{-1}$.
		\FOR{$i = 0,\dots, k-1$}
		\STATE Choose $R$ by sampling $s=\frac{n}{k}\log{(1/\epsilon)}$ indices uniformly at random from $[n]\backslash S_t^{(i)}$.\vspace{0.2cm}
		\STATE $i_s = \argmax_{j\in R} \frac{\a_j(t)^\top\F_{S_t^{(i)}}^{-2}\a_j(t)}{\sigma^{2}+\a_j(t)^\top\F_{S_t^{(i)}}^{-1}\a_j(t)}$.\vspace{0.2cm}
		\STATE Set $S_t^{(i+1)}= S_t^{(i)}\cup \{i_s\}$.\vspace{0.2cm}
		\STATE $\F_{S_t^{(i+1)}}^{-1} = \F_{S_t^{(i)}}^{-1}-\frac{\F_{S_t^{(i)}}^{-1}\a_{i_s}(t)\a_{i_s}(t)^\top\F_{S_t^{(i)}}^{-1}}{\sigma^2+\a_{i_s}(t)^\top\F_{S_t^{(i)}}^{-1}\a_{i_s}(t)}$
		\ENDFOR
		\RETURN $S_t = S_t^{(k)}$.
	\end{algorithmic}
\end{algorithm}
In the this section, we present a randomized greedy algorithm to approximately solve optimization problem \ref{eq:sensor} and provide its performance guarantees. 

Given prohibitive complexity of SDP relaxation and greedy schemes for sensor scheduling in large-scale systems, to provide practical feasibility, inspired by the algorithm developed in \cite{mirzasoleiman2014lazier} that only works for submodular objectives,  we propose a computationally efficient randomized greedy algorithm (see Algorithm \oldref{alg:greedy}) that finds an approximate solution to \ref{eq:sensor} with a guarantee on its achievable MSE. Algorithm \oldref{alg:greedy} performs the task of sensor scheduling in the following way. At each iteration of the algorithm, a subset $R$ of size $s$ is sampled uniformly at random and without replacement from the set of sensors. The marginal gain provided by each of these $s$ sensors to the objective function is computed using \ref{eq:mg}, and the one yielding the highest marginal gain is added to the set of selected sensors. Then the efficient recursive formula in \ref{eq:upf} is used to update $\F_S^{-1}$ so it can be used in the next iteration. This procedure is repeated $k$ times.

\textit{Remark 2:} The parameter $\epsilon$ in Algorithm \oldref{alg:greedy}, $e^{-k}\leq\epsilon<1$, denotes a predefined constant that is chosen to strike a desired balance between performance and complexity. When $\epsilon = e^{-k}$, each iteration includes all of the non-selected sensors in $R$ and Algorithm \oldref{alg:greedy} coincides with the greedy scheme. However, as $\epsilon$ approaches $1$, $|R|$ and thus the overall computational complexity decreases. 
\subsection{Performance Analysis of the Proposed Scheme}
In this section we analyze performance and complexity of Algorithm \oldref{alg:greedy} and in Theorem \oldref{thm:card} provide a bound on the performance of the proposed randomized greedy scheme when applied to finding an approximate solution to the maximization \ref{eq:sensor}.

Before stating the main results, we first provide two lemmas. Lemma \oldref{lem:curv} upper-bounds the difference between the values of the objective corresponding to two sets having different cardinalities while Lemma \oldref{lem:rand} provides a lower bound on the expected marginal gain.

\begin{lemma}\label{lem:curv}
\textit{Let $\{{\cal C}_l\}_{l=1}^{n-1}$ be the element-wise curvatures of $f(S)$. Let $S$ and $T$ be any schedules of sensors such that $S\subset T \subseteq [n]$ with $|T\backslash S|=r$. Then, it holds that
\begin{equation}
f(T)-f(S)\leq  C(r)\sum_{j\in T\backslash S}f_j(S),
\end{equation}
where $ C(r)=\frac{1}{r}(1+\sum_{l=1}^{r-1}{\cal C}_l)$.}
\end{lemma}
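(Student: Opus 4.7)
The plan is to reduce the inequality to a telescoping decomposition of $f(T)-f(S)$ combined with a symmetry (averaging-over-orderings) argument, so that the per-element curvature bounds ${\cal C}_l$ can be applied termwise and then collected into the stated constant $C(r)$.

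First I would enumerate the elements of $T\setminus S$ as $\{j_1,\dots,j_r\}$ in some fixed order and telescope:
\begin{equation*}
f(T)-f(S)=\sum_{l=1}^{r} f_{j_l}\!\bigl(S\cup\{j_1,\dots,j_{l-1}\}\bigr).
\end{equation*}
For each $l$, the set $S\cup\{j_1,\dots,j_{l-1}\}$ differs from $S$ by exactly $l-1$ elements and $j_l$ lies outside both, so by the definition of element-wise curvature,
\begin{equation*}
f_{j_l}\!\bigl(S\cup\{j_1,\dots,j_{l-1}\}\bigr)\leq {\cal C}_{l-1}\,f_{j_l}(S),
\end{equation*}
with the convention ${\cal C}_0=1$ (which is consistent since the corresponding set difference is empty). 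This already yields an upper bound of the form $\sum_{l=1}^{r}{\cal C}_{l-1}f_{j_l}(S)$, but it still depends on the chosen ordering and is not symmetric in the elements of $T\setminus S$.

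To eliminate this dependence, I would average the bound over all $r!$ permutations $\pi$ of $T\setminus S$. Since the left-hand side does not depend on the ordering, the inequality survives averaging, and by symmetry each $j\in T\setminus S$ appears in position $l$ in exactly $(r-1)!$ permutations, so $\frac{1}{r!}\sum_{\pi} f_{\pi(l)}(S)=\frac{1}{r}\sum_{j\in T\setminus S}f_j(S)$. Hence
\begin{equation*}
f(T)-f(S)\leq \sum_{l=1}^{r}{\cal C}_{l-1}\cdot\frac{1}{r}\sum_{j\in T\setminus S}f_j(S)=\frac{1}{r}\Bigl(1+\sum_{l=1}^{r-1}{\cal C}_l\Bigr)\sum_{j\in T\setminus S}f_j(S),
\end{equation*}
which is exactly the claimed bound with $C(r)=\frac{1}{r}(1+\sum_{l=1}^{r-1}{\cal C}_l)$.

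I do not expect a serious obstacle here; the one conceptual move is recognizing that a fixed ordering gives an asymmetric bound and that averaging over permutations is what produces the symmetric right-hand side. The only small care needed is at $l=1$, where the telescoping step contributes $f_{j_1}(S)$ itself and is correctly handled by setting ${\cal C}_0=1$.
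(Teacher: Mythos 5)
Your proof is correct and follows essentially the same route as the paper: telescope $f(T)-f(S)$ along an ordering of $T\setminus S$, bound each increment by ${\cal C}_{l-1}f_{j_l}(S)$ via the element-wise curvature, and symmetrize over orderings. The only cosmetic difference is that you average over all $r!$ permutations, whereas the paper sums the inequality over $r$ cyclic shifts (each element occupying each position exactly once); both yield the same constant $C(r)$.
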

\begin{proof}
See Appendix 
III.
\end{proof}
\begin{lemma}\label{lem:rand}
\textit{Let $S_t^{(i)}$ be the set of selected sensors at the end of the $i\ts{th}$ iteration of Algorithm \oldref{alg:greedy}. Then
\begin{equation}
\E\left[f_{(i+1)_s}(S_t^{(i)})|S_t^{(i)}\right]\geq \frac{1-\epsilon^{\beta}}{k}\sum_{j\in O_t\backslash S_t^{(i)}}f_j(S_t^{(i)}),
\end{equation}
where $O_t$ is the set of optimal sensors at time $t$, $i_s$ is the index of the selected sensor at the $i\ts{th}$ iteration, $\beta=1+\max\{0,\frac{s}{2n}-\frac{1}{2(n-s)}\}$, and $s=\frac{n}{k}\log{(1/\epsilon)}$.}
\end{lemma}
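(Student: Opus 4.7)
The plan is to fix the notation $S := S_t^{(i)}$, $A := O_t \setminus S$, and $m := |A|$. When $m = 0$ the right-hand side is an empty sum and the inequality holds trivially, so assume $m \geq 1$. Because the greedy selection step of Algorithm \oldref{alg:greedy} picks the element of the random sample $R$ with the largest marginal gain, the desired expectation equals $\E[\max_{j \in R} f_j(S) \mid S]$, and the task reduces to lower-bounding this quantity.

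The first key step is a symmetrization argument. A direct count gives $\Pr(R \cap A = T_0) = \binom{n-i-m}{s-r}/\binom{n-i}{s}$ for every $T_0 \subseteq A$ with $|T_0| = r$, so conditional on $|R \cap A| = r \geq 1$ the intersection $R \cap A$ is uniformly distributed on size-$r$ subsets of $A$. Linearity of expectation therefore yields
\begin{equation*}
\E\!\left[\tfrac{1}{r}\sum_{j \in R \cap A} f_j(S) \,\middle|\, |R \cap A| = r\right] = \frac{1}{m}\sum_{j \in A} f_j(S).
\end{equation*}
Combining the elementary bound $\max_{j \in R} f_j(S) \geq \tfrac{1}{|R \cap A|}\sum_{j \in R \cap A} f_j(S)$, valid when $R \cap A \neq \emptyset$, with nonnegativity of marginal gains (Proposition \oldref{p:mono}) and averaging over $r$ produces
\begin{equation*}
\E\!\left[\max_{j \in R} f_j(S) \,\middle|\, S\right] \;\geq\; \frac{\Pr(R \cap A \neq \emptyset)}{m}\sum_{j \in A} f_j(S).
\end{equation*}

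The second step is a hypergeometric tail estimate. From the factorization $\Pr(R \cap A = \emptyset) = \binom{n-i-m}{s}/\binom{n-i}{s} = \prod_{l=0}^{s-1}(n-i-m-l)/(n-i-l)$ and the observation that every factor is at most $1 - m/(n-i)$, one obtains $\Pr(R \cap A = \emptyset) \leq (1-m/(n-i))^s \leq e^{-ms/(n-i)}$. Substituting $s = (n/k)\log(1/\epsilon)$ produces $\Pr(R \cap A = \emptyset) \leq \epsilon^{m\beta/k}$ with $\beta := n/(n-i) \geq 1$.

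The final move, and the one I expect to carry most of the subtlety, is matching this probabilistic bound to the form $(1-\epsilon^\beta)/k$ required by the lemma: the exponent naturally produced is $m\beta/k$ rather than $\beta$, so an algebraic manipulation is required. The map $t \mapsto 1 - x^t$ is concave on $[0,1]$ and vanishes at $t = 0$ for any $x \in (0,1)$, hence $1 - x^t \geq t(1-x)$ for $t \in [0,1]$. Applying this with $x = \epsilon^\beta$ and $t = m/k$, which is valid since $m \leq |O_t| = k$, gives $\Pr(R \cap A \neq \emptyset) \geq (m/k)(1-\epsilon^\beta)$. The factor $m/k$ now cancels the $1/m$ inherited from the symmetrization step, and the claimed inequality $\E[f_{(i+1)_s}(S) \mid S] \geq \frac{1-\epsilon^\beta}{k}\sum_{j \in A} f_j(S)$ follows with $\beta \geq 1$ as asserted.
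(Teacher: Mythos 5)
Your proof is correct, and it follows the same two-part skeleton as the paper's: lower-bound the probability that the random sample $R$ hits $O_t\setminus S_t^{(i)}$, then convert that hit probability into a bound on the expected maximal marginal gain. Both halves, however, are executed differently, and in each case your version is the more self-contained one. For the expectation step the paper simply cites Lemma 2 of \cite{mirzasoleiman2014lazier}; your exchangeability argument (conditional on $|R\cap A|=r$ the intersection is a uniform $r$-subset of $A$, so the sample average over $R\cap A$ has conditional expectation $\frac{1}{m}\sum_{j\in A}f_j(S)$, and the maximum dominates the average) is precisely a proof of that cited lemma, so nothing is lost by not invoking it. For the hit-probability step the paper bounds $\prod_{l=0}^{s-1}\bigl(1-\tfrac{m}{n-i-l}\bigr)$ via the AM--GM inequality, harmonic numbers, and the Euler--Mascheroni expansion, arriving at $\beta=1+\tfrac{s}{2n}-\tfrac{1}{2(n-s)}$, which is only guaranteed to be at least $1$ ``for sufficiently large $n$''; your termwise bound $\tfrac{n-i-m-l}{n-i-l}\leq 1-\tfrac{m}{n-i}$ yields $\beta=n/(n-i)\geq 1$ unconditionally, which is shorter and removes that asymptotic caveat. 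The closing concavity inequality $1-x^{t}\geq t(1-x)$ for $t\in[0,1]$ is the same device the paper uses. The only cosmetic point is that your $\beta$ depends on the iteration index $i$; since the lemma asserts only the existence of some constant $\beta\geq 1$ and your bound implies the statement with $\beta=1$ uniformly over iterations, this is immaterial.
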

\begin{proof}
See Appendix 
IV.
\end{proof}
Theorem \oldref{thm:card} below states that Algorithm \oldref{alg:greedy} provides an approximate solution to the sensor scheduling problem. In particular, if $f(S)$ is characterized by a bounded maximum element-wise curvature, Algorithm \oldref{alg:greedy} returns a subset of sensors yielding an objective that is on average within a multiplicative factor of the objective achieved by the optimal schedule.

\begin{theorem}\label{thm:card}
\textit{Let $\mathcal{C}_{max}$ be the maximum element-wise curvature of $f(S)$, i.e., the objective function of sensor scheduling problem in \ref{eq:sensor}. Let $S_t$ denote the schedule of sensors selected by Algorithm \oldref{alg:greedy} at time $t$, and let $O_t$ be the optimum solution of \ref{eq:sensor} such that $|O_t|=k$. Then $f(S_t)$ is on expectation a multiplicative factor away from $f(O_t)$. That is,}
\begin{equation}\label{eq:card}
\E\left[f(S_t)\right]\geq \left(1-e^{-\frac{1}{c}}-\frac{\epsilon^\beta}{c}\right) f(O_t),
\end{equation}
\textit{where $c=\max\{{\cal C}_{\max},1\}$, $e^{-k}\leq\epsilon<1$, and $\beta=1+\max\{0,\frac{s}{2n}-\frac{1}{2(n-s)}\}$. Furthermore, the computational complexity of Algorithm \oldref{alg:greedy} is ${\cal O}(nm^2\log(\frac{1}{\epsilon}))$ where $n$ is the total number of sensors and $m$ is the dimension of $\x_t$.}
\end{theorem}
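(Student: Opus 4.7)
The plan is to combine the two preceding lemmas into a one-step recursion on $f(O_t) - \E[f(S_t^{(i)})]$ and then to bound the resulting product carefully to recover the additive form $e^{-1/c} + \epsilon^\beta/c$. First I would observe that the constant $C(r)$ appearing in Lemma \oldref{lem:curv} satisfies $C(r)\le c$ for every $r\in\{1,\dots,n-1\}$: if ${\cal C}_{\max}\le 1$ then $C(r)\le \frac{1}{r}(1+(r-1))=1=c$, and if ${\cal C}_{\max}\ge 1$ then $C(r)\le \frac{1}{r}(1+(r-1){\cal C}_{\max})\le {\cal C}_{\max}=c$. Applying Lemma \oldref{lem:curv} to $S=S_t^{(i)}$ and $T=O_t\cup S_t^{(i)}$, together with monotonicity ($f(O_t)\le f(O_t\cup S_t^{(i)})$), yields
\begin{equation*}
\sum_{j\in O_t\setminus S_t^{(i)}} f_j(S_t^{(i)})\;\ge\;\frac{1}{c}\bigl(f(O_t)-f(S_t^{(i)})\bigr).
\end{equation*}

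Next I would feed this into Lemma \oldref{lem:rand} and take expectations to obtain the one-step contraction
\begin{equation*}
f(O_t)-\E\bigl[f(S_t^{(i+1)})\bigr]\;\le\;\left(1-\frac{1-\epsilon^\beta}{ck}\right)\bigl(f(O_t)-\E\bigl[f(S_t^{(i)})\bigr]\bigr).
\end{equation*}
Iterating $k$ times starting from $f(S_t^{(0)})=0$ gives $f(O_t)-\E[f(S_t)]\le \bigl(1-\tfrac{1-\epsilon^\beta}{ck}\bigr)^{k}f(O_t)$. The main obstacle is to convert the multiplicative error factor $\bigl(1-\tfrac{1-\epsilon^\beta}{ck}\bigr)^{k}$ into the additive form stated in the theorem, since the naive bound $\le e^{-(1-\epsilon^\beta)/c}=e^{-1/c}e^{\epsilon^\beta/c}$ is too loose. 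Write $a=\tfrac{1}{ck}$ and $b=\tfrac{\epsilon^\beta}{ck}$, and note that $b\le a\le 1$ because $c\ge 1$. The function $x\mapsto(1-a+x)^{k}$ has derivative $k(1-a+x)^{k-1}\le k$ on $[0,b]$, so by the mean value theorem $(1-a+b)^{k}\le (1-a)^{k}+kb$. Combined with $(1-a)^k=(1-\tfrac{1}{ck})^k\le e^{-1/c}$, this yields $\bigl(1-\tfrac{1-\epsilon^\beta}{ck}\bigr)^{k}\le e^{-1/c}+\tfrac{\epsilon^\beta}{c}$, which is exactly the bound claimed in \ref{eq:card}.

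Finally, for the complexity I would argue per iteration: a uniform sample of size $s=\tfrac{n}{k}\log(1/\epsilon)$ is drawn; the marginal gain \ref{eq:mg} for each candidate requires the two quadratic forms $\a_j^\top\F^{-1}\a_j$ and $\a_j^\top\F^{-2}\a_j$, each computable in $\mathcal{O}(m^2)$ since $\F^{-1}$ is explicitly maintained as an $m\times m$ matrix; and the rank-one update \ref{eq:upf} that refreshes $\F^{-1}$ also costs $\mathcal{O}(m^2)$. Hence each iteration costs $\mathcal{O}(sm^2)=\mathcal{O}\bigl(\tfrac{n}{k}m^2\log(1/\epsilon)\bigr)$, and summing over the $k$ iterations gives the total $\mathcal{O}\bigl(nm^2\log(1/\epsilon)\bigr)$ stated in the theorem.
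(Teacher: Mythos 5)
Your proof is correct and follows essentially the same route as the paper: Lemma \oldref{lem:curv} with $T=O_t\cup S_t^{(i)}$ plus the bound $C(r)\le c$, Lemma \oldref{lem:rand}, the one-step contraction and induction from $f(\emptyset)=0$, and the same per-iteration complexity count. The only (harmless) deviation is the last algebraic step: you bound $\bigl(1-\tfrac{1-\epsilon^\beta}{ck}\bigr)^k$ via a mean-value-theorem argument on $x\mapsto(1-a+x)^k$, whereas the paper first passes to $e^{-(1-\epsilon^\beta)/c}$ and then applies $e^{ax}\le 1+axe^a$ for $0<x<1$ — so that route is not actually ``too loose'' as you claim; both yield exactly $e^{-1/c}+\epsilon^\beta/c$.
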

\begin{proof}
Consider $S_t^{(i)}$, the set generated at the end of the $i\ts{th}$ iteration of Algorithm \oldref{alg:greedy}. 
Employing Lemma \oldref{lem:curv} with $S=S_t^{(i)}$ and $T=O_t\cup S_t^{(i)}$, and using monotonicity of $f$ yields
\begin{equation}
\begin{aligned}
\frac{f(O_t)-f(S_t^{(i)})}{\frac{1}{r}\left(1+\sum_{l=1}^{r-1}{\cal C}_l\right)}&\leq \frac{f(O_t\cup S_t^{(i)})-f(S_t^{(i)})}{\frac{1}{r}\left(1+\sum_{l=1}^{r-1}{\cal C}_l\right)}\\
&\leq  \sum_{j\in O_t\backslash S_t^{(i)}}f_j(S_t^{(i)}),
\end{aligned}
\end{equation}
where $|O_t\backslash S_t^{(i)}|=r$. Now, using Lemma \oldref{lem:rand} we obtain
\begin{equation}
\E\left[f_{(i+1)_s}(S_t^{(i)})|S_t^{(i)}\right]\geq \left(1-\epsilon^{\beta}\right)\frac{f(O_t)-f(S_t^{(i)})}{\frac{k}{r}\left(1+\sum_{l=1}^{r-1}{\cal C}_l\right)}.
\end{equation}
Applying the law of total expectation yields
\begin{equation}
\begin{aligned}
\E\left[f_{(i+1)_s}(S_t^{(i)})\right]&=\E\left[f(S_t^{(i+1)})-f(S_t^{(i)})\right]\\
&\geq \left(1-\epsilon^{\beta}\right)\frac{f(O_t)-\E\left[f(S_t^{(i)})\right]}{\frac{k}{r}\left(1+\sum_{l=1}^{r-1}{\cal C}_l\right)}.
\end{aligned}
\end{equation}
Using the definition of the maximum element-wise curvature, we obtain 
\begin{equation}
\frac{1}{r}\left(1+\sum_{l=1}^{r-1}{\cal C}_l\right)\leq \frac{1}{r}(1+(r-1){\cal C}_{\max}) =g(r).
\end{equation}
It is easy to verify, e.g., by taking the derivative, that $g(r)$ is decreasing (increasing) with respect to $r$ if ${\cal C}_{\max}<1$ (${\cal C}_{\max}>1$). Let $c=\max\{{\cal C}_{\max},1\}$. Then 
\begin{equation}
\frac{1}{r} \left(1+\sum_{l=1}^{r-1}{\cal C}_l\right)\leq \frac{1}{r}(1+(r-1){\cal C}_{\max}) \leq c.
\end{equation}
Hence, 
\begin{equation}
\E\left[f(S_t^{(i+1)})-f(S_t^{(i)})\right]\geq \frac{1-\epsilon^\beta}{kc}\left(f(O_t)-\E\left[f(S_t^{(i)})\right]\right).
\end{equation}
By induction and due to the fact that $f(\emptyset) = 0$, 
\begin{equation}
\E[f(S_t)]\geq \left(1-\left(1-\frac{1-\epsilon^\beta}{kc}\right)^k\right)f(O_t).
\end{equation}
Finally, using the fact that $(1+x)^y\leq e^{xy}$ for $y>0$ and  the easily verifiable fact that $e^{ax}\leq 1+axe^a$ for $0<x<1$, 
\begin{equation}
\begin{aligned}
\E[f(S_t)]&\geq \left(1-e^{-\frac{1-\epsilon^\beta}{c}}\right)f(O_t)\\
&\stackrel{}{\geq} \left(1-e^{-\frac{1}{c}}-\frac{\epsilon^\beta}{c}\right)f(O_t).
\end{aligned}
\end{equation}
To take a closer look at computational complexity, note that step 6 costs $\mathcal{O}(\frac{n}{k}m^2\log(\frac{1}{\epsilon}))$ as one needs to compute $\frac{n}{k}\log(\frac{1}{\epsilon})$ marginal gains, each with complexity ${\cal O}(m^2)$. Step 8 requires ${\cal O}(m^2)$ arithmetic operations. Since there are $k$ such iterations, running time of Algorithm \oldref{alg:greedy} is ${\cal O}(nm^2\log(\frac{1}{\epsilon}))$. This completes the proof.
\end{proof}
Using the definition of $f(S)$ we obtain Corollary \oldref{col:mse} stating that, at each time step, the achievable mean-square error in \ref{eq:mse} obtained by forming an estimate using sensors selected by the randomized greedy algorithm is within a factor of the optimal mean-square error.

\begin{corollary}\label{col:mse}
	Instate the notation and hypothesis of Theorem \oldref{thm:card} and let $\alpha =1-e^{-\frac{1}{c}}-\frac{\epsilon}{c}$. Let $\mathrm{MSE}_{S_t}$ denote the mean-square estimation error obtained by forming an estimate using information provided by the sensors selected by Algorithm \oldref{alg:greedy} at time $t$, and let $\mathrm{MSE}_{o}$ be the optimal mean-square error formed using information collected by optimum solution of \ref{eq:sensor}. Then the expected $\mathrm{MSE}_{S_t}$ is bounded as
\begin{equation}\label{eq:card1}
\E\left[\mathrm{MSE}_{S_t}\right]\leq \alpha \mathrm{MSE}_{o} + (1-\alpha) \mathrm{Tr}(\P_{t|t-1}).
\end{equation}
\end{corollary}

\textit{Remark 3:} Since the proposed scheme is a randomized algorithm, Theorem \oldref{thm:card} and Corollary \oldref{col:mse} state that the expected MSE associated with the solution returned by Algorithm \oldref{alg:greedy} is a multiplicative factor $\alpha$ away from the optimal MSE. Notice that, as we expect, $\alpha$ is decreasing in both $c$ and $\epsilon$. If $f(S)$ is characterized by a small curvature, then $f(S)$ is nearly submodular and randomized greedy algorithm delivers a near-optimal scheduling. As we decrease $\epsilon$, $\alpha$ increases which in turn results in a better approximation factor. In the limit, if $\epsilon = e^{-k}$, then $\alpha = 1-e^{-\frac{1}{c}}-\frac{e^{-k}}{c}$ corresponds to the approximation factor of the greedy algorithm. Notice that the negligible term $-\frac{e^{-k}}{c}$ stems from the specific analysis that we employed to treat the randomization step of Algorithm 1. In fact, one can show $\alpha = 1-e^{-\frac{1}{c}}$ for the greedy algorithm by following a similar argument as that of the classical analysis given in \cite{nemhauser1978analysis}.

\textit{Remark 4:} The computational complexity of the greedy method for sensor selection that utilizes the efficient recursions given in Proposition \ref{p:mono} to find the marginal gains  is ${\cal O}(knm^2)$. Hence, our proposed scheme provides a reduction in complexity by $k\slash \log(\frac{1}{\epsilon})$ which may be particularly beneficial in large-scale networks, as we illustrate in our simulation results. 

Next, we study the performance of the randomized greedy algorithm using the tools of probably approximately correct (PAC) learning theory \cite{valiant1984theory}. The randomized selection step of Algorithm \oldref{alg:greedy} can be interpreted as approximating the marginal gains of the selected sensors using a  greedy scheme \cite{shamaiah2010greedy}. More specifically, for the $i^{th}$ iteration it holds that $f_{j_{rg}}(S_t^{(i)}) = \eta_t^{(i)} f_{j_{g}}(S_t^{(i)})$, where subscripts $rg$ and $g$ refer to the sensors selected by the randomized greedy (Algorithm \oldref{alg:greedy}) and the greedy algorithm, respectively, and $\ell_{i}(\epsilon)\leq\eta_t^{(i)}\leq 1$ for all $i \in [k]$ are random variables with mean $\mu_i(\epsilon)$. \footnote{Notice that $\ell_{i}(\epsilon)$ and $\mu_{i}(\epsilon)$ are time-varying quantities where the time index is omitted for simplicity of the notation.}
In view of this argument, we obtain 
Theorem \oldref{thm:pac} which states that  if $f(S)$ is characterized by a bounded maximum element-wise curvature and $\{\eta_t^{(i)}\}_{i=1}^k$ are independent random variables, Algorithm \oldref{alg:greedy} returns a subset of sensors yielding an objective that with high probability is only a multiplicative factor away from the objective achieved by the optimal schedule.
\begin{theorem}\label{thm:pac}
	Instate the notation and hypotheses of Theorem \oldref{thm:card}. Assume $\{\eta_t^{(i)}\}_{i=1}^k$ is a collection of random variables such that $\ell_{i}(\epsilon)\leq\eta_t^{(i)}\leq 1$, and $\E[\eta_t^{(i)}] = \mu_i(\epsilon)$ for all $i$ and $t$. let $\ell_{min}(\epsilon) = \min_{i,t}\{\ell_i(\epsilon)\}$ and $\mu_{min}(\epsilon) = \min_{i,t}\{\mu_i(\epsilon)\}$. Then
	\begin{equation}
	\begin{aligned}
	f(S_t) \geq \left(1- e^{-\frac{\ell_{min}(\epsilon)}{c}}\right)f(O_t),
	\end{aligned}
	\end{equation}
	Furthermore, if  $\{\eta_t^{(i)}\}_{i=1}^k$ are independent, for all $0<q<1$, with probability at least $1-e^{-Ck}$ it holds that
	\begin{equation}\label{eq:pacbound}
f(S_t)\geq \left(1-e^{-\frac{(1-q)\mu_{min}(\epsilon)}{c}}\right) f(O_t),
	\end{equation}
	for some $C>0$.
\end{theorem}
\begin{proof}
Consider $S_t^{(i)}$, the set generated at the end of the $i\ts{th}$ iteration of Algorithm \oldref{alg:greedy} and let $(i+1)_{g}$ and $(i+1)_{rg}$ denote the sensors selected by greedy and randomized greedy algorithm at $i\ts{th}$ iteration, respectively.  Let $c=\max\{{\cal C}_{\max},1\}$. Employing Lemma \oldref{lem:curv} with $S=S_t^{(i)}$ and $T=O_t\cup S_t^{(i)}$, and using monotonicity of $f$ yields
\begin{equation}
\begin{aligned}
f(O_t)-f(S_t^{(i)})&\leq f(O_t\cup S_t^{(i)})-f(S_t^{(i)})\\
&\leq  c \sum_{j\in O_t\backslash S_t^{(i)}}f_j(S_t^{(i)}).
\end{aligned}
\end{equation}
Using the fact that 
\begin{equation}
f_j(S_t^{(i)})\leq f_{(i+1)_{rg}}(S_t^{(i)}) \leq f_{(i+1)_{g}}(S_t^{(i)})
\end{equation}
for all $j$, we obtain
\begin{equation}\label{eq:rel1}
\begin{aligned}
f(O_t)-f(S_t^{(i)})\leq ckf_{(i+1)_{g}}(S_t^{(i)}).
\end{aligned}
\end{equation}
On the other hand,
\begin{equation}\label{eq:rel2}
\begin{aligned}
f(S_t^{(i+1)}) - f(S_t^{(i)})&=f_{(i+1)_{rg}}(S_t^{(i)}) \\
&=\eta_t^{(i+1)} f_{(i+1)_{g}}(S_t^{(i)}).
\end{aligned}
\end{equation} 
Combining \ref{eq:rel1} and \ref{eq:rel2} yields
\begin{equation}
f(S_t^{(i+1)}) - f(S_t^{(i)}) \geq \frac{\eta_t^{(i+1)}}{kc}\left(f(O_t)-f(S_t^{(i)})\right).
\end{equation}
Using a similar inductive argument as we did in the proof of Theorem \oldref{thm:card} and due to the fact that $f(\emptyset) = 0$,
\begin{equation}\label{eq:pacb1}
\begin{aligned}
f(S_t) &\geq  \left(1-\left(1- \sum_{i=1}^k\frac{\eta_t^{(i)}}{kc}\right)\right)f(O_t)\\
&\stackrel{(a)}{\geq} \left(1- e^{-\sum_{i=1}^k\frac{\eta_t^{(i)}}{kc}}\right)f(O_t),
\end{aligned}
\end{equation}
where to obtain $(a)$ we use the fact that $(1+x)^y\leq e^{xy}$ for $y>0$. To obtain the stated result, we apply the Bernstein's inequality \cite{hogg1995introduction} on the term $\sum_{i=1}^k\eta_t^{(i)}$ that is a sum of independent random variables. Note that since $\{\eta_t^{(i)}\}$ are bounded random variables, by Popoviciu's inequality \cite{hogg1995introduction} for all $i\in[k]$ it holds that 
\begin{equation}
\Var[\eta_t^{(i)}] \leq \frac{1}{4}(1-\ell_{i}(\epsilon))^2.
\end{equation}
Hence, by Bernstein's inequality for all $0<q<1$
\begin{equation}
\Pr\{\sum_{i=1}^k\eta_t^{(i)}< (1-q)\sum_{i=1}^k\mu_i\} <p
\end{equation}
where 
\begin{equation}
\begin{aligned}
p&=\exp\left(-\frac{(1-q)^2(\sum_{i=1}^k\mu_i(\epsilon))^2}{\frac{1-q}{3}\sum_{i=1}^k\mu_i(\epsilon)+\frac{1}{4}\sum_{i=1}^k(1-\ell_{i}(\epsilon))^2}\right)\\
&\stackrel{(a)}{\leq} \exp\left(-\frac{k(1-q)^2\mu_{min}^2(\epsilon)}{\frac{1-q}{3}\mu_{min}(\epsilon)+\frac{1}{4}(1-\ell_{min}(\epsilon))^2}\right)\\
&\stackrel{}{=} e^{-C(\epsilon,q)k}
\end{aligned}
\end{equation}
where $(a)$ follows as $p$ only increases by lower bounding $\mu_i(\epsilon)$ and $\ell_{i}(\epsilon)$. Finally, employing this results in \ref{eq:pacb1} yields 
\begin{equation}
f(S_t) \geq\left(1- e^{-\frac{(1-q)\mu_{min}(\epsilon)}{c}}\right)f(O_t),
\end{equation}
with probability at least $1-e^{C(\epsilon,q)k}$. This completes the proof.
\end{proof}
Indeed, in simulation studies (see Section \oldref{sec:sim}) we empirically verify the results of Theorems \oldref{thm:card} and \oldref{thm:pac} and illustrate that Algorithm 2 performs favorably compared to the competing schemes both on average and for each individual sensor scheduling tasks.

Similar to Corollary \oldref{col:mse}, we now obtain a probabilistic bound on the  achievable mean-square error in \ref{eq:mse} at each time step using the proposed randomized greedy algorithm, as stated in Corollary \oldref{col:pac} below.

\begin{corollary}\label{col:pac}
	Instate the notation and hypotheses of Corollary \oldref{col:mse} and Theorem \oldref{thm:pac}. Let $0<q<1$ and define $\alpha = 1- \exp(-\frac{(1-q)\mu_{min}(\epsilon)}{c})$. Then, with probability at least $1-e^{-Ck}$ it holds that
	\begin{equation}\label{eq:pac1}
\mathrm{MSE}_{S_t}\leq\alpha \mathrm{MSE}_{o} + (1-\alpha) \mathrm{Tr}(\P_{t|t-1}),
	\end{equation}
	for some $C>0$.
\end{corollary}
\section{Simulation Results}\label{sec:sim}
To test the performance of the proposed randomized greedy algorithm, we compare it with the classic greedy algorithm and the SDP relaxation in a variety of settings as detailed below. 

We consider the problem of Kalman filtering for state estimation in a linear time-varying system. For simplicity, let us assume that the system is in steady state and $\mathbf{H}  = \I_m$. The initial state is a zero-mean Gaussian random vector with covariance $\mathbf{\Sigma_x} = \I_m$. We further specify zero-mean Gaussian process and measurement noises with covariance matrices $\mathbf{Q}=0.05\I_m$ and $\mathbf{R}=0.05\I_n$, respectively. At each time step, the measurement vectors, i.e., the rows of the measurement matrix $\A(t)$, are drawn according to $\mathcal{N} \sim (0,\frac{1}{m}\I_m)$.

The MSE values and running time of each scheme is averaged over 10 Monte-Carlo simulations. The time horizon for each run is $T = 10$. The greedy and randomized greedy algorithms are implemented in MATLAB while the SDP relaxation scheme is implemented via CVX \cite{grant2008cvx}. All experiments were run on a laptop with 2.0 GHz Intel Core i7-4510U CPU and 8.00 GB of RAM.

We first consider the system having state dimension $m = 50$, the number of measurements $n = 400$, and $k = 55$, and compare the MSE values of each method over the time horizon of interest. For randomized greedy we set $\epsilon = 0.001$. Fig. \oldref{fig:overtime} shows that the greedy method consistently yields the lowest MSE while the MSE of the randomized greedy algorithm is slightly higher. The MSE performance achieved by the SDP relaxation is considerably larger than those of the greedy and randomized greedy algorithms. The running time of each method is given in Table \oldref{tab:sim1_time}. Both the greedy algorithm and the randomized greedy algorithm are much faster than the SDP formulation. The randomized greedy scheme is nearly two times faster than the greedy method.
\begin{table}[!htb]
\centering
\begin{tabular}{|c|c|c|}
\hline
Randomized Greedy & Greedy & SDP Relaxation \\ 
\hline
0.20 s & 0.38 s & 249.86 s\\
\hline
\end{tabular}
\caption{Running time comparison of randomized greedy, greedy, and SDP relaxation sensor selection schemes ($m=50$, $n=400$, $k=55$, $\epsilon=0.001$).}
\label{tab:sim1_time}
\end{table}
\begin{figure}[t]
\centering
    \includegraphics[width=0.49\textwidth]{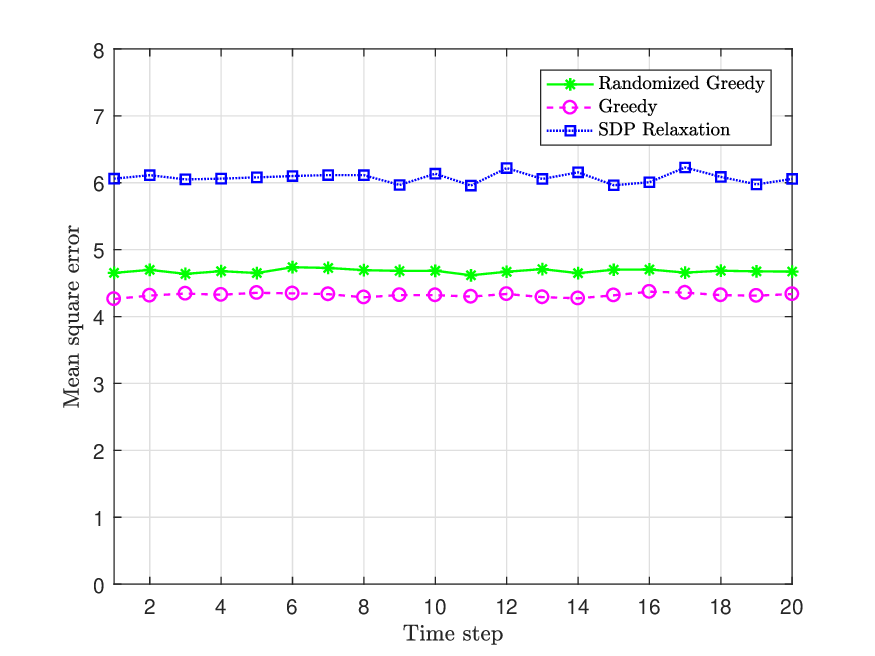}
    \caption{MSE comparison of randomized greedy, greedy, and SDP relaxation sensor selection schemes employed in Kalman filtering.}
\label{fig:overtime}
\vspace{-0.3cm}
\end{figure}
Note that, in this example, in each iteration of the sensor selection procedure the randomized scheme only computes the marginal gain for a sampled subset of size 50. As a comparison, the greedy approach computes the marginal gain for all 400 sensors. In summary, the greedy method yields the lowest MSE but is much slower than the proposed randomized greedy algorithm.

To study the effect of the number of selected sensors on performance, we vary $k$ from 55 to 115 with increments of 10. The MSE values at the last time step (i.e., $t = 10$) for each algorithm are shown in Fig. \oldref{fig:kvary}(a). As the number of selected sensors increases, the estimation becomes more accurate, as reflected by the MSE of each algorithm. Further, the difference between the MSE values consistently decreases as more sensors are selected. The running times shown in Fig. \oldref{fig:kvary}(b) indicate that the randomized greedy scheme is nearly twice as fast as the greedy method, while the SDP method is orders of magnitude slower than both greedy and randomized greedy algorithms. 
\begin{figure}[t]
\centering
    \begin{subfigure}{.49\textwidth}
  \centering
    \includegraphics[width=1\textwidth]{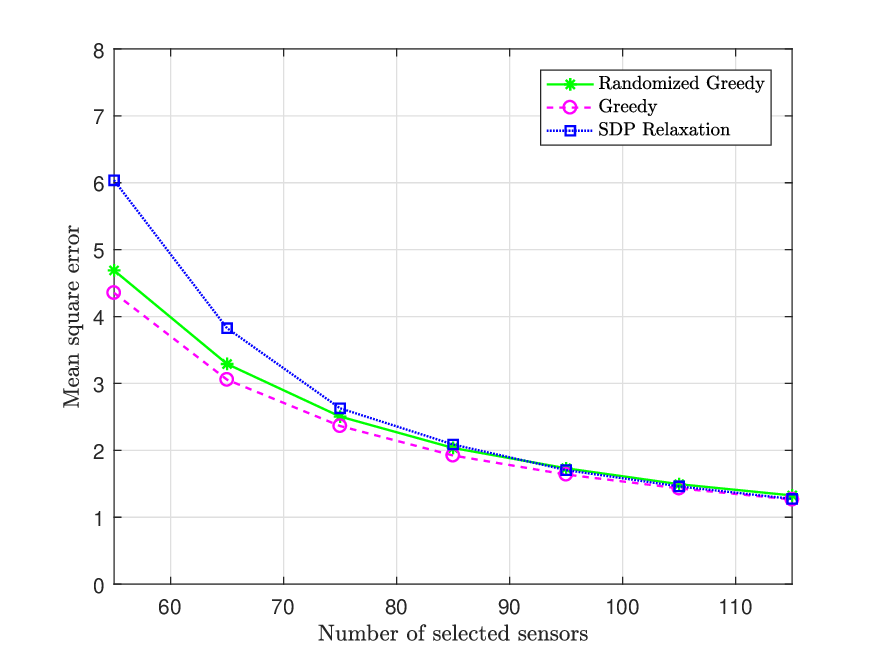}\caption{\scriptsize MSE comparison}
        \end{subfigure}
        \begin{subfigure}{.49\textwidth}
  \centering
    \includegraphics[width=1\textwidth]{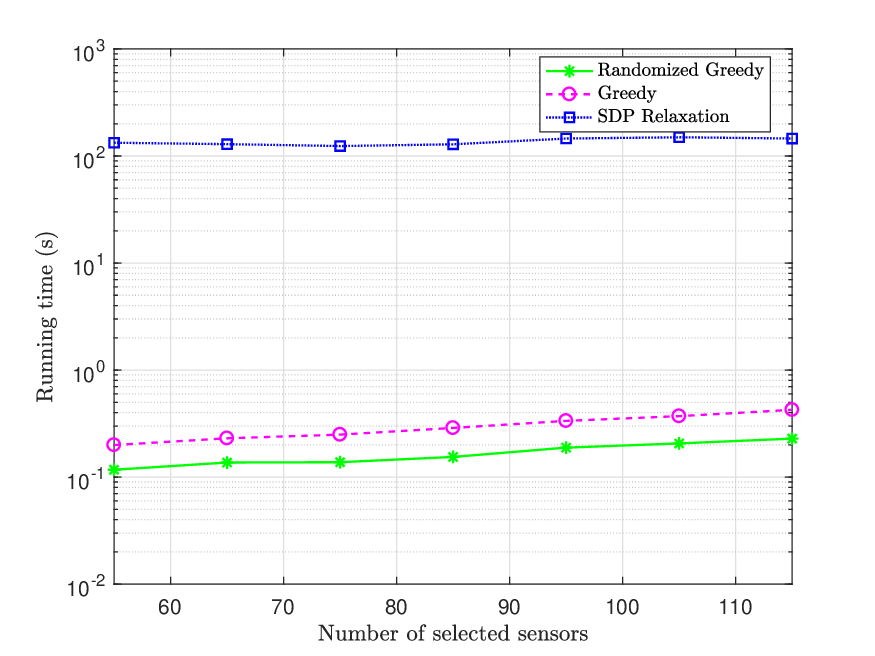}\caption{\scriptsize Running time comparison}
\end{subfigure}
\caption{Comparison of randomized greedy, greedy, and SDP relaxation schemes as the number of selected sensors increases.}
\label{fig:kvary}
\vspace{-0.3cm}
\end{figure}

Finally, we compare the performance of the randomized greedy algorithm to that of the greedy algorithm as the size of the system increases. We run both methods for 20 different sizes of the system. The initial size was set to $m = 20$, $n=200$, and $k=25$ and all three parameters are scaled by $\beta$ where $\beta$ varies from 1 to 20. In addition, to evaluate the effect of $\epsilon$ on the performance and runtime of the randomized greedy approach, we repeat experiments for $\epsilon \in \{0.1, 0.01, 0.001\}$.
Note that the computational complexity of the SDP relaxation scheme is prohibitive in this setting. 
Fig. \oldref{fig:betavary}(a) illustrates the percentage difference of the MSE between the two methods. In particular, we show 
\begin{equation*}
\% \: \Delta \mathrm{MSE} = \frac{\mathrm{MSE_{RG}}-\mathrm{MSE_{G}}}{\mathrm{MSE_{G}}} \times 100 
\end{equation*}
where ``RG'' and ``G''  refer to the randomized greedy and greedy algorithms, respectively. It can be seen that this difference between the MSEs reduces as the system scales up.
The running time is plotted in Fig. \oldref{fig:betavary}(b). As the figure illustrates, the gap between the running times grows with the size of the system and the randomized greedy algorithm performs nearly 25 times faster than the greedy method for the largest network.
Fig. \oldref{fig:betavary} shows that using a smaller $\epsilon$ results in a lower MSE  while it slightly increases the running time. These results suggest that, for large systems, the randomized greedy provides almost the same MSE while being much faster than the greedy algorithm.
\begin{figure}[t]
\centering
    \begin{subfigure}{.49\textwidth}
  \centering
    \includegraphics[width=1\textwidth]{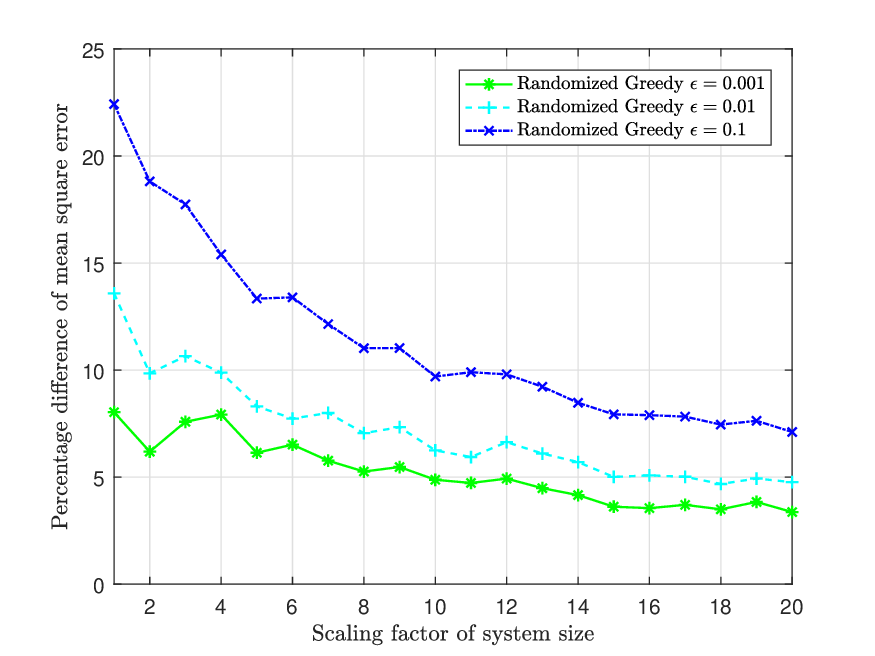}\caption{\scriptsize $ \Delta \: \mathrm{MSE}$}
        \end{subfigure}
        \begin{subfigure}{.49\textwidth}
  \centering
    \includegraphics[width=1\textwidth]{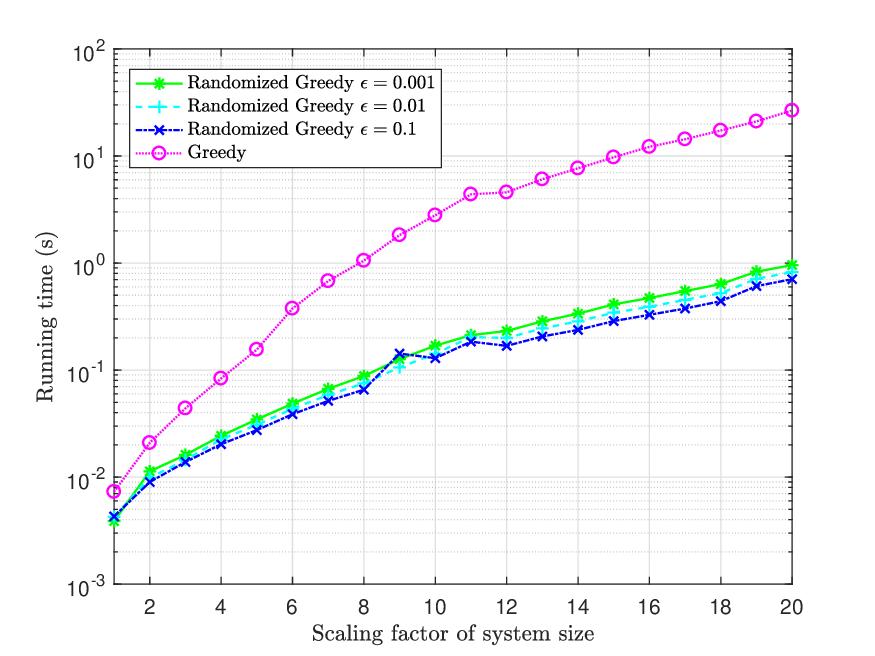}\caption{\scriptsize Running time comparison}
\end{subfigure}
\caption{A comparison of the randomized greedy and greedy algorithms for varied network size.}
\label{fig:betavary}
\vspace{-0.3cm}
\end{figure}
\section{Conclusion} \label{sec:concl}
In this paper, we considered the problem of state estimation in large-scale linear time-varying dynamical systems. We proposed a randomized greedy algorithm for selecting sensors to query such that their choice minimizes the estimator's mean-square error at each time step. We established the performance guarantee for 
the proposed algorithm and analyzed its computational complexity.
To our knowledge, the proposed scheme is the first randomized algorithm for sensor scheduling with an explicit bound on its achievable mean-square error.
In addition, we provided a probabilistic theoretical bound on the element-wise curvature of the objective function. Furthermore, in simulations we demonstrated that the proposed algorithm is superior to the classical greedy and SDP relaxation methods in terms of running time while providing the same or better utility.

As a future work, we intend to extend this approach to nonlinear dynamical systems and obtain a theoretical guarantee on the quality of the resulting approximate solution found by randomized greedy algorithm. Moreover, it would be of interest to extend the framework established in this manuscript to related problems such as minimal actuator placement.
\normalsize
\begin{appendices}
\section{Derivation of SDP Relaxation of \ref{sdp}}\label{pf:sdp}
Let $z_i \in \{0,1\}$ indicate the
membership of the $i\ts{th}$ sensor in the selected subset at time $t$ and define $\z = [z_1,z_2,\dots,z_n]^\top$. Hence, \ref{eq:sensor} can be written as
\begin{equation*}\label{sensor1}
\begin{aligned}
& \underset{\z}{\text{min}}
\quad \mathrm{Tr}\left(\left(\P_{t|t-1}^{-1}+\sigma^{-2}\sum_{i=1}^n z_i\a_i(t)\a_i(t)^\top\right)^{-1}\right)\\
& \text{s.t.}\hspace{0.5cm}  z_i \in \{0,1\}, \phantom{k} \forall i\in [n]\\
& \hspace{0.9cm} \sum_{i=1}^n z_i =k.
\end{aligned}
\end{equation*}
The convex relaxation of the above optimization problem is
given by
\begin{equation}\label{sensor2}
\begin{aligned}
& \underset{\z}{\text{min}}
\quad \mathrm{Tr}\left(\left(\P_{t|t-1}^{-1}+\sigma^{-2}\sum_{i=1}^n z_i\a_i(t)\a_i(t)^\top\right)^{-1}\right)\\
& \text{s.t.}\hspace{0.5cm}  0\leq z_i \leq 1, \phantom{k} \forall i\in [n]\\
& \hspace{0.9cm} \sum_{i=1}^nz_i =k.
\end{aligned}
\end{equation}
In order to obtain an SDP in standard form, let $\Y$ be a positive
semidefinite matrix such that 
\begin{equation}\label{shur}
\Y \succeq \left(\P_{t|t-1}^{-1}+\sigma^{-2}\sum_{i=1}^n z_i\a_i(t)\a_i(t)^\top\right)^{-1}
\end{equation}
Then, \ref{sensor2} can equivalently be written as 
\begin{equation}\label{sensor3}
\begin{aligned}
& \underset{\z,\Y}{\text{min}}
\quad \mathrm{Tr}(\Y)\\
& \text{s.t.}\hspace{0.5cm}  0\leq z_i \leq 1, \phantom{k} \forall i\in [n]\\
& \hspace{0.9cm} \sum_{i=1}^nz_i =k\\
& \hspace{1cm} \Y - \left(\P_{t|t-1}^{-1}+\sigma^{-2}\sum_{i=1}^n z_i\a_i(t)\a_i(t)^\top\right)^{-1}\succeq \mathbf{0}
\end{aligned}
\end{equation}
Note that the expression on the left hand side of last constraint in \ref{sensor3} can be thought of as the Schur complement \cite{horn2012matrix} of the block PSD matrix
\begin{equation}
\B = \begin{bmatrix}
\Y& \I\\
\I & \P_{t|t-1}^{-1}+\sigma^{-2}\sum_{i=1}^n z_i\a_i(t)\a_i(t)^\top
\end{bmatrix}.
\end{equation}
Since Schur complement of $\B$ is positive semidefinite if and
 only if $\B \succeq \mathbf{0}$, we obtain the SDP relaxation given in \ref{sdp}.

The solution to the SDP may take fractional
values, in which case some kind of sorting and rounding need
to be employed in order to obtain the desired solution. Here, we select the sensors corresponding to the $k$ $z_i$'s with largest values, as originally suggested by \cite{joshi2009sensor}.
\section{Proof of Proposition \oldref{p:mono}}\label{pf:mono}
First, note that 
\[f(\emptyset)=\mathrm{Tr}\left(\P_{t|t-1}-\F_{\emptyset}^{-1}\right)=\mathrm{Tr}\left(\P_{t|t-1}-\P_{t|t-1}\right)=0.\]
Now, for $j\in[n]\backslash S$ it holds that
\begin{equation}\label{eq:update}
\begin{aligned}
f_j(S) &= f(S\cup \{j\}) -f(S) \\
&= \mathrm{Tr}\left(\P_{t|t-1}-\F_{S\cup \{j\}}^{-1}\right) -  \mathrm{Tr}\left(\P_{t|t-1}-\F_S^{-1}\right)\\
&= \mathrm{Tr}\left(\F_S^{-1}\right) -\mathrm{Tr}\left(\F_{S\cup \{j\}}^{-1}\right)\\
&=  \mathrm{Tr}\left(\F_S^{-1}\right)-\mathrm{Tr}\left(\left(\F_S+\sigma^{-2}\a_j(t)\a_j(t)^\top\right)^{-1}\right)\\
&\stackrel{(a)}{=} 
\mathrm{Tr}\left(\frac{\F_S^{-1}\a_j(t)\a_j(t)^\top\F_S^{-1}}{\sigma^{2}+\a_j(t)^\top\F_S^{-1}\a_j(t)}\right)  \\
&\stackrel{(b)}{=} \frac{\a_j(t)^\top\F_S^{-2}\a_j(t)}{\sigma^{2}+\a_j(t)^\top\F_S^{-1}\a_j(t)}
\end{aligned}
\end{equation}
where $(a)$ is by applying matrix inversion lemma (Sherman–Morrison formula) \cite{bellman1997introduction} on $(\F_S+\sigma^{-2}\a_j(t)\a_j(t)^\top)^{-1}$, and $(b)$ is by properties of trace of a matrix. Finally, since $\F_S$ is a symmetric positive definite matrix, $f_j(S) > 0$ which in turn implies monotonicity.
\section{Proof of Lemma \oldref{lem:curv}}\label{pf:curv}
Let $S \subset T$ and $T \backslash S=\{j_1,\dots,j_r\}$. Therefore,
\begin{multline}
f(T)-f(S)=f(S\cup \{j_1,\dots,j_r\})-f(S)\\
\hspace{0.5cm}={f_{j_1}(S)}+{f_{j_2}(S\cup\{j_1\})}+\dots\\+{f_{j_r}(S\cup\{j_1,\dots,j_{r-1}\})}.
\end{multline}
Applying definition of element-wise curvature yields  
\begin{equation}\label{ds}
\begin{aligned}
f(T)-f(S)&\leq {f_{j_1}(S)}+{\cal C}_1{f_{j_2}(S)}+\dots+{\cal C}_{r-1}{f_{j_r}(S)}\\
&={f_{j_1}(S)}+\sum_{l=1}^{r-1}{\cal C}_l{f_{j_t}(S)}.
\end{aligned}
\end{equation}
Note that \ref{ds} is invariant to the ordering of elements in $T \backslash S$. In fact, it is straightforward to see that given ordering $\{j_1,\dots,j_r\}$, one can choose a set $P=\{{\cal P}_1,\dots,{\cal P}_r\}$ with $r$ permutations -- e.g., by defining the right circular-shift operator ${\cal P}_t(\{j_1,\dots,j_r\})=\{j_{r-t+1},\dots,j_1,\dots\}$ for $1 \leq t\leq r$ -- such that ${\cal P}_p(j)\neq{\cal P}_q(j)$ for $p\neq q$ and $\forall j\in T \backslash S$. Hence, \ref{ds} holds for $r$ such permutations. Summing all of these $r$ inequalities we obtain
\begin{equation}\label{qs}
r(f(T)-f(S))\leq \left(1+\sum_{l=1}^{r-1}{\cal C}_l\right)\sum_{j\in T \backslash S}{f_{j}(S)}.
\end{equation}
Rearranging \ref{qs} yields the desired result.
\section{Proof of Lemma \oldref{lem:rand}}\label{pf:rand}
First, we aim to bound the probability of the event that the random set $R$ contains at least an index from the optimal set of sensor as this is a necessary condition to reach the optimal MSE. Consider $S_t^{(i)}$, the set of selected sensors at the end of $i\ts{th}$ iteration of Algorithm \oldref{alg:greedy} and let $\Phi = R \cap (O\backslash S_t^{(i)})$. It holds that\footnote{Note that without loss of generality and for simplicity we assume that $s$ is an integer.}
\begin{equation}
\begin{aligned}
\Pr\{\Phi = \emptyset\}& = \prod_{l = 0}^{s-1} \left(1-\frac{|O\backslash S_t^{(i)}|}{|[n]\backslash S_t^{(i)}|-l}\right)\\
&\stackrel{(a)}{\leq}\left(1-\frac{|O\backslash S_t^{(i)}|}{s}\sum_{l=0}^{s-1}\frac{1}{|[n]\backslash S_t^{(i)}|-l}\right)^s\\
&\stackrel{(b)}{\leq}\left(1-\frac{|O\backslash S_t^{(i)}|}{s}\sum_{l=0}^{s-1}\frac{1}{n-l}\right)^s
\end{aligned}
\end{equation}
where $(a)$ is by the inequality of arithmetic and geometric means, and $(b)$ holds since $|[n]\backslash S_i|\leq n$. Now recall for any integer $p$,
\begin{equation}\label{eq:har}
H_p=\sum_{l=1}^p\frac{1}{p}=\log p + \gamma+\zeta_p
\end{equation}
where, $H_p$ is the $p\ts{th}$ harmonic number, $\gamma$ is the Euler–Mascheroni constant, and $\zeta_p	 = \frac{1}{2p} - \mathcal{O}(\frac{1}{p^4})$ is a monotonically decreasing sequence related to Hurwitz zeta function \cite{lang2013algebraic}. Therefore, using the identity \ref{eq:har} we obtain
\begin{equation}
\begin{aligned}
\Pr\{\Phi = \emptyset\}&\stackrel{}{\leq} (1-\frac{|O\backslash S_t^{(i)}|}{s}(H_n-H_{n-s}))^s \\
&\stackrel{}{=} (1-\frac{|O\backslash S_t^{(i)}|}{s}(\log(\frac{n}{n-s})+\zeta_n-\zeta_{n-s}))^s\\
&\stackrel{(a)}{\leq} (1-\frac{|O\backslash S_t^{(i)}|}{s}(\log(\frac{n}{n-s})-\frac{s}{2n(n-s)}))^s\\
&\stackrel{(b)}{\leq} ((1-\frac{s}{n})e^{\frac{s}{2n(n-s)}})^{|O\backslash S_t^{(i)}|}
\end{aligned}
\end{equation}
where $(a)$ follows since $\zeta_n-\zeta_{n-s} = \frac{1}{2n} -\frac{1}{2(n-s)} + \mathcal{O}(\frac{1}{(n-s)^4})$, and $(b)$ is by the fact that $(1+x)^y\leq e^{xy}$ for any real number $y>0$. Next, the fact that $\log (1-x)\leq -x-\frac{x^2}{2}$ for $0<x<1$ yields
\begin{equation}
(1-\frac{s}{n})e^{\frac{s}{2n(n-s)}} \leq e^{-\frac{\beta_1 s}{n}}
\end{equation}
where $\beta_1 = 1 +(\frac{s}{2n}-\frac{1}{2(n-s)})$. On the other hand, we can also upper bound $\Pr\{\Phi = \emptyset\}$ as
\begin{equation}
\begin{aligned}
\Pr\{\Phi = \emptyset\}& \leq \left(1-\frac{|O\backslash S_t^{(i)}|}{s}\sum_{l=0}^{s-1}\frac{1}{n-l}\right)^s\\
& \leq\left(1-\frac{|O\backslash S_t^{(i)}|}{n}\right)^s\\
&\leq e^{-\frac{s}{n}|O\backslash S_t^{(i)}|}
\end{aligned}
\end{equation}
where we again employed the inequality  $(1+x)^y\leq e^{xy}$. Now, let $\beta = \max\{1,\beta_1\}$. Thus, 
\begin{equation}\label{eq:pbound}
\Pr\{\Phi \neq \emptyset\} \geq 1- e^{-\frac{\beta s}{n}|O\backslash S_t^{(i)}|}\geq \frac{1-\epsilon^\beta}{k}(|O\backslash S_t^{(i)}|)
\end{equation}
by definition of $s$ and the fact that $1- e^{-\frac{\beta s}{n}x} $ is a concave function. Finally, according to Lemma 2 in \cite{mirzasoleiman2014lazier},
\begin{equation}\label{eq:lazy}
\E[f_{(i+1)_s}(S_t^{(i)})|S_t^{(i)}]\geq \frac{\Pr\{\Phi \neq \emptyset\}}{|O\backslash S_t^{(i)}|}\sum_{j\in O_t\backslash S_t^{(i)}}f_o(S_t^{(i)}).
\end{equation}
Combining \ref{eq:pbound} and \ref{eq:lazy} yields the stated results.
\end{appendices}
\bibliographystyle{IEEEtran}
\bibliography{IEEEabrv,refs}
\end{document}